\begin{document}
\newcommand*{\code}{\tt}
\newcommand*{\ser}[2]{#1_1,\dots,#1_#2} % in math mode \ser i k --> i1,...,ik
\newcommand*{\implying}{\!\Rightarrow\!}
\newcommand*{\Eff}   [2]{\mathsf E_\mathrm{#1} [#2]}
\newcommand*{\Statef}[1]{\mathsf \Sigma_\mathrm{#1}}
\newcommand*{\State} [2]{\mathsf \Sigma_\mathrm{#1} [#2]}
\newcommand*{\wAIwA}{({w\mbox{\small A}},\varvar I_{w\mbox{\scriptsize A}})}
\newcommand*{\opname}[1]{\textsf{#1}}
\newcommand*{\arname}[1]{\ensuremath{\mathrm{#1}}}
\newcommand*{\gtag}[1]{\mbox{\sffamily\bfseries #1}}
\newcommand*{\garg}[1]{$\mathrm{#1}$}
\newcommand*{\varvar}[1]{\ensuremath{\mbox{\bfseries \itshape #1}}}
\newcommand*{\nt}[1]{\mbox{\it{#1}}}
\newcommand*{\cond}[3]{(#1\to#2:#3)}
\newcommand*{\real}{\;\mbox{double}\;}
\newcommand*{\id}[1]{\mathrm{#1}}
\newcommand*{\node} {\gtag{node}}
\newcommand*{\dom}  {\gtag{dom}}
\newcommand*{\ports}{\gtag{ports}}
\newcommand*{\body} {\gtag{body}}
\newcommand*{\eval} {\gtag{eval}}
\newcommand*{\Dom} {\opname{Dom}}
\newcommand*{\FP} {{\mathrm F}_P}
\newcommand*{\GP} {{\mathrm G}_P}
\newcommand*{\term}[2]{#1\{#2\}}
\newcommand*{\pred}[3]{#1^{#2}\{#3\}}
\newcommand*{\figname}[1]{#1.pdf}

%
%
%
%
%\title{Exact Polyhedral Model for Affine Nested Loops Programs with Non-affine Conditions}
%\title{Exact Polyhedral Model for Weakly Dynamic Affine Programs}
\title{Yet Another Way of Building Exact Polyhedral Model for Weakly Dynamic Affine Programs}
%\title{Yet Another Way of Building, Representing and Using the Polyhedral Model for Weakly Dynamic Affine Programs}
%
%^\titlerunning{Exact PM w N/A Conditions}  % abbreviated title (for running head)
%                                     also used for the TOC unless
%                                     \toctitle is used
%
\numberofauthors{1}
\author{
\alignauthor
Arkady V. Klimov\\
       \affaddr{Institute of Design Problems in Microelectronics}\\
       \affaddr{Russian Academy of Sciences}\\
       \affaddr{Moscow, Russia}\\
       \email{arkady.klimov@gmail.com}
}
% 2nd. author
%
%^\authorrunning{Ark. V. Klimov} % abbreviated author list (for running head)
%
%%%% list of authors for the TOC (use if author list has to be modified)
%^\tocauthor{Arkady Klimov}
%

\maketitle              % typeset the title of the contribution

\begin{abstract}

Exact polyhedral model (PM) can be built in the general case if the only control structures are {\code do}-loops and structured {\code if}s, and if loop counter bounds, array subscripts and {\code if}-conditions are affine expressions of enclosing loop counters and possibly some integer constants. In more general dynamic control programs, where arbitrary {\code if}s and {\code while}s are allowed, in the general case the usual dataflow analysis can be only fuzzy. This is not a problem when PM is used just for guiding the parallelizing transformations, but is insufficient for transforming source programs to other computation models (CM) relying on the PM, such as our version of dataflow CM or the well-known KPN.

The paper presents a novel way of building the exact polyhedral model and an extension of the concept of the exact PM, which allowed us to add in a natural way all the processing related to the data dependent conditions. Currently, in our system, only arbirary {\code if}s (not {\code while}s) are allowed in input programs. The resulting polyhedral model can be easily put out as an equivalent program with the dataflow computation semantics.

\keywords{exact polyhedral model, weakly dynamic affine programs, data dependent conditionals, dataflow computation model }

\end{abstract}
\section{Introduction}
\label{s:intro}
The exact Polyhedral Model (PM), a.k.a. Exact Array Dataflow Analysis (EADA), can be build for a limited class of programs. Normally, it embraces affine loop nests with assignments in between, in which loop bounds and array element indexes are affine expressions of surrounding loop variables and fixed structure parameters (array sizes etc.). {\code If}-statements with affine conditions are also allowed. Methods of EADA are well developed \cite{Feautrier88parametricinteger,Feautrier91dataflowanalysis,DBLP:conf/sc/Pught91} for this class of programs. The results are usually used for guiding parallelizing transformations.

We define (see Section \ref{s:form}) the PM as a mapping that assigns to each read (load) instance in the computation its unique write (store) instance that has written the value being read. In other words, it is a collection of source functions, each bound to a single read operation in a program. Such function takes iteration vector of the read instance and produces the name and the iteration vector of a write instance or symbol $\bot$ indicating that such write do not exist and the original state of memory is read. 

However when the source program contains also one or several {\code if}-statements with non-affine (e.g., data dependent) conditions the known methods suggest only approximation which, generally, provides {\em a set of possible} writes for some reads. It is usually referred to as Fuzzy Array Dataflow Analysis (FADA) \cite{Feautrier97FADA}. In some specific cases such model may provide a source function that uses as its input also values of predicates associated with non-affine conditionals in order to produce the unique source. These cases seem to be those in which the number of such predicate values is finite (uniformly bounded).

Usually, this does not makes a problem as parallelization can proceed relying on the approximate PM. But our aim is to convert the source program (part) completely into the dataflow computation model, and any approximation is unacceptable for us. So, our task was to extend the class of programs for which exact PM can be built by programs with non-affine conditionals. Our model representation language is extended with predicate symbols corresponding to non-affine Boolean expressions in the source code. From such a model the exact source function for each read can be easily extracted.  These functions are, generally, recursive and they depend on usual affine parameters as well as on an unlimited number of predicate values. 

But the source function is not our aim. For building dataflow program we need the inversed, \emph{use set} function.  From the parallelization perspective this program carries implicitly the maximum amount of parallelism of the source program. A simple computation strategy (see Section \ref{ss:prop:dflow}) exhibits all this parallelism. More details and references can be found in \cite{ArK:PSI:11, ArKlimov:meta:14}.

In this paper we describe briefly our original way of building the dataflow model for affine programs and then expand it to programs with non-affine conditionals. The affine class and the affine solution tree are defined in Section \ref{s:form}. Sections \ref{s:effect}--\ref{s:bdf} describe our algorithm of building the PM. Several examples are presented in Section \ref{s:example}. Section \ref{s:prop} describes two different semantics of the PM considered as a program. Section \ref{s:relwrk} compares our approach and results with related ones.

\section{Some Formalism}\label{s:form}
Consider a Fortran program fragment $P$. We are interested in memory reads and writes which have the form of array element or scalar variable. The latter will be treated below as 0-dimension arrays. 

We define a {\em computation graph} by running the program $P$ with some input data. The graph consists of two kinds of nodes: reads and writes, corresponding respectively to individual executions of load or store memory operation. There is a link from a write $w$ to a read $r$ if $r$ reads the value written by $w$. Thus, $r$ uses the same memory cell as $w$ and $w$ is the last write to this cell before $r$.

Our purpose is to obtain a compact parametric description of all computation graphs for a given program $P$. To make such description feasible we need to consider a limited class of programs. It is a well known affine class, which can be formally defined by the set of rules shown in Fig.\ref{fig:affprog}.

\begin{figure}[htb]
\begin{center}
\begin{tabular}{@{\hspace*{0mm}}l@{\hspace*{10mm}}l@{\hspace*{0mm}}}
  ${\rm\Lambda}$                              &(empty statement)      \\
  $ {\arname A}(\ser ik)=e$                   &(assignment, $k\ge 0$) \\
  $X_1$; $X_2$                                &(sequence)             \\
  \code if $c$ then $X_1$; else $X_2$; endif  &(conditional)          \\
  \code do \varvar v $=e_1,e_2$; $X$; enddo	  &({\code do}-loop)      \\
\end{tabular}
\end{center}
\caption {Affine program constructors}\label{fig:affprog}
\end{figure}

The right hand side $e$ of an assignment may contain array element access ${\arname A}(\ser ik)$, $k\ge 0$. All index expressions as well as bounds $e_1$ and $e_2$ of \mbox{\code do}-loops must be affine in surrounding loop variables and structure parameters. {\em Affine} expressions are those built from variables and integer constants with addition, subtraction and multiplication by literal integer. Also, in an affine expression, we allow whole division by literal integer. Condition $c$ also must be affine, i.e.,  equivalent to $e=0$ or $e>0$ where $e$ is affine. 

Programs (or program parts) following these limitations have been called {\em static control programs} (SCoP) \cite{Feautrier97FADA,GrieblHabThesis}. Their computation graph depends only on structure parameters and does not depend on dynamic data values. 

Below, we are to remove the restriction that conditional expression $c$ must be affine. Such extended program class has been called {\em weakly dynamic programs} (WDP) \cite{StefanovThesis}. Here we shall allow only arbitrary {\code if}s but not {\code while}s which will be considered in the future.

A point in the computation trace of an affine program may be identified as $(s,I_s)$, where $s$ is a point in the program and $I_s$ is the iteration vector, i.e., a vector of integer values of all enclosing loop variables of point $s$. The list of these variables will be denoted as $\varvar I_s$, which allows to depict the point $s$ itself as $(s,\varvar I_s)$. (Here and below boldface symbols denote  variables or list of variables as syntactic objects, while normal italic symbols denote some values as usual).

Thus, denoting an arbitrary read or write instance as $(r,I_r)$ or $(w,I_w)$ respectively, we represent the whole computation graph as a mapping:
\begin{equation} 
  \FP : (r,I_r) \mapsto (w,I_w) 
\end{equation}
which, for any read node $(r,I_r)$, yields the write node $(w,I_w)$ that has written the value being read, or yields $\bot$ if no such write exist and thus the original contents of the cell is read. This form of graph is called a \emph{source} graph, or S-graph.

However, for translation to dataflow computation model we need the reversed map, that, for each write node, finds all read nodes which read the very value written. So, we need the multi-valued mapping
\begin{equation} 
  \GP : (w,I_w) \mapsto \{(r,I_r)\} 
\end{equation}
which for each write node $(w,I_w)$ yields a set of all read nodes $\{(r,I_r)\}$ that read that very value written. We call this form of computation graph a \emph{use} graph, or U-graph.

A subgraph of S-graph (U-graph) associated with a given read $r$ (write $w$) will be called {\em $r$-component} ({\em $w$-component}).

For each program statement (or point) $s$ we define the domain $\Dom(s)$ as a set of values of iteration vector $I_s$, such that $(s,I_s)$ occurs in the computation. The following proposition summarizes the well-established property of static control programs \cite{Feautrier88parametricinteger,Feautrier91dataflowanalysis,GrieblHabThesis,DBLP:conf/popl/Maslov94,DBLP:conf/sc/Pught91} (which is also justified by our algorithm).
\newtheorem{proposition}{Proposition}
\begin{proposition}
For any statement $(\!s,\!\varvar I_s\!)$ in a static control program $P$ its domain $\Dom(s)$ can be represented as finite disjoint union $\bigcup_i D_i$, such that each subdomain $D_i$ can be specified as a conjunction of affine conditions of variables $\varvar I_s$ and structure parameters, and, when the statement is a read $(r,\varvar I_r)$, there exist such $D_i$ that the mapping $\FP$ on each subdomain $D_i$ can be represented as either $\bot$ or $(w, (\ser em))$ for some write $w$, where each $e_i$ is an affine expression of variables $\varvar I_r$ and structure parameters. \label{prop1}
\end{proposition}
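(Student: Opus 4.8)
The plan is to prove this by structural induction on the affine program constructors shown in Figure~\ref{fig:affprog}, tracking for each statement both the domain decomposition and, for reads, the source mapping $\FP$. The key invariant I would maintain is that at every program point the set of reachable iteration vectors is a finite union of \emph{Presburger-definable} sets --- in fact conjunctions of affine (in)equalities over $\varvar I_s$ and the structure parameters --- and that the last-write relation is likewise piecewise affine over such a decomposition. The base cases (empty statement and a single assignment with $k\ge 0$ array indices) are immediate: the domain is a single polyhedron cut out by the enclosing loop bounds, which are affine by hypothesis.

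\textbf{Induction over the constructors.} For the sequence $X_1;X_2$ the domains simply combine, and the decomposition is preserved because a finite union of conjunctions of affine conditions is closed under the operations I need. For the \texttt{do}-loop, the new loop variable $\varvar v$ is bounded by the affine expressions $e_1,e_2$, so each subdomain gains one more affine lower and upper bound; crucially the loop bounds are affine in the \emph{enclosing} counters, so the whole stays within the affine class. For the structured conditional, the affine condition $c$ (equivalent to $e=0$ or $e>0$) partitions each incoming subdomain into the then-part and else-part, each still described by a conjunction of affine conditions --- this is exactly where affineness of $c$ is used, and it is what will fail once data-dependent conditionals are admitted later in the paper.

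\textbf{The source mapping.} The heart of the statement is the claim about $\FP$ on a read $(r,\varvar I_r)$. The approach is: for a given read of cell ${\arname A}(\ser ik)$, the candidate writes are all store instances to the same array whose index expressions match and that execute before the read in program order; both the subscript-equality constraints and the ``executes-before'' constraint are affine (the sequential order between two iteration vectors of nested loops is a disjunction of lexicographic conditions, each a conjunction of affine (in)equalities). The \emph{last} such write is obtained by the standard lexicographic maximization (a parametric integer program, per Feautrier), whose optimum over a parametrized polyhedron is itself a piecewise-affine function of the parameters $\varvar I_r$ and the structure parameters. This yields exactly the desired finite partition into subdomains $D_i$ on each of which $\FP$ is either a fixed write $w$ with affine index functions $(\ser em)$, or $\bot$ when the feasible set of candidate writes is empty.

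\textbf{Main obstacle.} I expect the delicate point to be establishing that the lexicographic-maximum operation keeps us inside the advertised class, namely that the result is a \emph{finite} disjoint union of affine pieces with affine witness functions, rather than something merely semilinear. This rests on the fact that parametric integer linear programming over a polyhedron returns a solution that is affine on each cell of a finite polyhedral decomposition of parameter space; the whole-division-by-literal-integer allowed in affine expressions must be folded in carefully, since it introduces the modular (quasi-affine) pieces that PIP is known to produce. Making the disjointness of the $\bigcup_i D_i$ explicit --- as opposed to a mere covering --- is a secondary bookkeeping task, handled by refining overlapping cells, but it is routine once the piecewise-affine structure is in hand. I would lean on the cited results \cite{Feautrier88parametricinteger,Feautrier91dataflowanalysis} for the PIP machinery rather than reprove it.
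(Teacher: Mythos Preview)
Your proposal is sound. For the domain part, your structural induction over the constructors is exactly what any argument must do and matches the spirit of the paper's compositional construction. For the source mapping $\FP$, however, you take the classical Feautrier route: enumerate candidate writes for a given read, impose the affine subscript-equality and lexicographic happens-before constraints, and take the lexicographic maximum via parametric integer programming; you correctly identify that PIP's piecewise (quasi-)affine output is what delivers the finite decomposition into affine pieces $D_i$.

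The paper, though, does not give a standalone proof of this proposition at all: it states it as a well-established result (with citations) and adds that it is ``also justified by our algorithm''. That algorithm (Sections~\ref{s:effect}--\ref{s:state}) is a genuinely different route from yours. Instead of pairing each read with every candidate write and solving one lex-max per pair, it builds an \emph{effect tree} $\Eff A{X}$ for each statement $X$ and array ${\arname A}$ compositionally --- $\opname{Seq}$ for sequencing, and a one-dimensional parametric maximum $\opname{Fold}$ (eq.~(\ref{eq:fold})) for loops --- then propagates these into \emph{state trees} $\State A{s}$ at each program point, and finally resolves a read by substituting its subscripts into the relevant state tree and pruning. Your read--write-pair method lets the PIP machinery deliver the piecewise-affine conclusion in one shot and is the textbook justification; the paper's effect/state construction processes all writes to an array together, is structurally compositional, and (as the paper itself remarks in Section~\ref{s:relwrk}) admits a lazy implementation that avoids building unneeded subtrees. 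Both are valid justifications of the same proposition.
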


This result suggests the idea to represent each $r$-component of $\FP$ as a solution tree with affine conditions at branching vertices and terms of the form $\term W{e_1, \dots, e_m}$ or $\bot$ at leaves. A similar concept of quasi-affine solution tree, {\em quast}, was suggested by P. Feautrier \cite{Feautrier91dataflowanalysis}. 

A single-valued solution tree (S-tree) is a structure used to represent $r$-components of a S-graph. Its syntax is shown in Fig.\ref{fig:synstree}. It uses just linear expressions (\nt{L-expr}) in conditions and term arguments, so a special vertex type was introduced in order to implement integer division.
\begin{figure}[htb]
\begin{center} \small
\begin{tabular}{@{\hspace*{0mm}}l@{\hspace*{0mm}}l@{\hspace*{1.5mm}}l}
\nt{S-tree}  &  ::=    $ \bot $                                                         &                 \\
\qquad $\mid$&  $\nt{term} $                                                            &                  \\
\qquad $\mid$&  $\cond{\nt{cond}}{\nt{S-tree}_t}{\nt{S-tree}_f} $                       & (branching)       \\
\qquad $\mid$&  $(\nt{L-expr} =: \nt{num}\;\nt{var}+\nt{var}\to \nt{S-tree}) $          & (integer division) \\
\nt{term}    &\ ::=    $ \term{\nt{name}}{\nt{L-expr}_1 ,\dots, \nt{L-expr}_k} $        & ($k\ge 0$)          \\
\nt{var}     &\ ::=    $ \nt{name} $                                                    &                      \\
\nt{num}     &\ ::=    $ \dots \mid -2 \mid -1 \mid 0 \mid 1 \mid 2 \mid 3 \mid \dots $ &                       \\
\nt{cond}    &\ ::=    $ \nt{L-cond} \mid \nt{predicate} $                              & (any condition)        \\
\nt{L-cond}  &\ ::=    $ \nt{L-expr} = 0 \mid \nt{L-expr} > 0 $                         & (affine condition)      \\
\nt{L-expr}  &\ ::=    $ \nt{num} \mid \nt{num}\;\nt{var} +  \nt{L-expr} $              & (affine expression)      \\
\nt{atom}    &\ ::=    $ \bot \mid \term{\nt{name}}{\nt{num}_1 ,\dots, \nt{num}_k} $    & (ground term, $k\!\ge\!0$)\\
\end{tabular}
\end{center}
\caption {Syntax for single-valued solution tree}\label{fig:synstree}
\end{figure}

Given concrete integer values of all free variables of the S-tree one can evaluate the tree to an atom. The two following evaluation rules must be applied iteratively.

A branching like $\cond{c}{T_1}{T_2}$ evaluates to $T_1$ if conditional expression c evaluates to true, otherwise to $T_2$. Non-affine conditions are expressed by a \nt{predicate}. This will be explained below in Section \ref{ss:eff:na}.  

A division $(e =: m \varvar q + \varvar r \to T)$ introduces two new variables $(\varvar q,\varvar r)$ that take respectively the quotient and the remainder of integer division of integer value of $e$ by positive constant integer $m$. The tree evaluates as $T$ with parameter list extended with values of these two new variables.

It follows from Proposition \ref{prop1} that for an affine program $P$ the $r$-component of the S-graph $\FP$ for each read $(r,\varvar I_r)$ can be represented in the form of S-tree $T$ depending on variables $\varvar I_r$ and structure parameters.

However the concept of S-tree is not sufficient for representing $w$-components of U-graph, because those must be multi-valued functions in general. So, we extend the definition of S-tree to the definition of multi-valued tree, M-tree, by two auxiliary rules shown on Fig \ref{fig:synmtree}.

\begin{figure}[htb]
\begin{center} \small
\begin{tabular}{@{\hspace*{0mm}}l@{\hspace*{1mm}}r@{\hspace*{1mm}}l@{\hspace*{4mm}}l}
\nt{M-tree} & ::=    &$ \dots \mbox{the same as for S-tree} \dots $   &                          \\
            & $\mid$ &$ (\& \nt{M-tree}_1 \dots \nt{M-tree}_n) $        & (finite union, $n \ge 2$) \\
            & $\mid$ &$ (\verb"@" \,\nt{var} \to \nt{M-tree}) $         & (infinite union)           \\
\end{tabular}
\end{center}
\caption {Syntax for multi-valued tree}\label{fig:synmtree}
\end{figure}

The semantics also changes. The result of evaluating M-tree is a set of atoms. Symbol $\bot$ now represents the empty set, and the term $\term N{\dots}$ represents a singleton. 

To evaluate $(\& \ser Tn)$ one must evaluate sub-trees $T_i$ and take the union of all results. The result of evaluating $(\verb"@"\varvar v \to T)$ is mathematically defined as the union of infinite number of results of evaluating $T$ with each integer value $v$ of variable $\varvar v$. In practice the result of evaluating $T$ is non-empty only within some bound interval of values $v$. In both cases the united subsets are supposed to be disjoint. 

Below we present our algorithm of building a S-graph (Sections \ref{s:effect} and \ref{s:state}) and then a U-graph (Section \ref{s:bdf}).  

\section{Building Statement Effect}\label{s:effect}
\subsection{Statement Effect and its Evaluation}\label{ss:eff:eval}

Consider a program statement $X$, which is a part of an affine program $P$, and some $k$-dimensional array {\arname A}. Let $\wAIwA$ denote an arbitrary write operation on an element of array {\arname A} within a certain execution of statement $X$, or the totality of all such operations. Suppose that the body of $X$ depends affine-wise on free parameters $\ser pl$ (in particular, they may include variables of loops surrounding $X$ in $P$). We define the effect of $X$ over array {\arname A} as a function 
$$%\begin{equation}
\Eff A{X} : (\ser pl; \ser qk) \mapsto \wAIwA + \bot
$$%\end{equation}
that, for each tuple of parameters $\ser pl$ and indexes $\ser qk$ of an element of array {\arname A}, yields an atom $\wAIwA$ or $\bot$. The atom indicates that the write operation $\wAIwA$ is the last among those that write to element {\arname A}($\ser qk$) during execution of $X$ with affine parameters $\ser pl$ and $\bot$ means that there are no such operations. 

The following claim is another form of Proposition \ref{prop1}: \emph {the effect can be represented as an S-tree with program statement labels as term names}. We call them simply \emph{effect trees}. (All assignments are supposed labeled during preprocessing).

Building effect is the core of our approach. Using S-trees as data objects we implemented some operations on them that are used in the algorithm presented on Fig.\ref{fig:effrules}. A good mathematical foundation of similar operations for similar trees has been presented in \cite{Guda:13}.

The algorithm goes upwards along the AST from primitives like empty and assignment statements. Operation \opname{Seq} computes the effect of a statement sequence from the effects of component statements. Operation \opname{Fold} builds the effect of a {\code do}-loop given the effect of the loop body. For {\code if}-statement with affine condition the effect is built just by putting the effects of branches into a new conditional node.
\begin{figure}[htb]
\begin{center} \small
%\begin{tabular}{@{\hspace*{0mm}}l@{\hspace*{6mm}}l}
\begin{tabbing} 
zzzzzzz\=zzzzzzzzzzzzzzzzzzzzzzzzzzzzzzzzzzzzz\=\kill
$\Eff A{\mathrm \Lambda}  =  \bot$				                        \>\> (empty statement) \\
$\Eff A{X_1;X_2}  =  \mbox{\opname{Xeq}} (\Eff A{X_1},\Eff A{X_2})$	    \>\> (sequence)\\
$\Eff A{LA:\; {\arname A}(\ser ek)=e} =  $                              \>\> (assignments to {\arname A})\\
      \> $\cond{\varvar q_1=e_1}{\dots \cond{\varvar q_k=e_k}{\term{LA}{\varvar I}}{\bot}\dots}{\bot}$   \\
      \> where $\varvar I$ is a list of all outer loop variables          \> \\
$\Eff A{LB:\; {\arname B}(\dots)=e}  =  \bot $			                \>\> (other assignments)\\
$\Eff A{\mbox{\code if $c$ then $X_1$ else $X_2$ endif}}= $             \>\> (conditional)\\
      \> $= \cond{c}{\Eff A{X_1}}{\Eff A{X_2}}$                              \\
$\Eff A{\mbox{\code do $\varvar v=e_1,e_2$; $X$; enddo}}  = $           \>\> ({\code do}-loop)\\
      \> $= \mbox{\opname{Fold}}(\varvar v,e1,e2,\Eff A{X}) $      
%\end{tabular}
\end{tabbing}
\end{center}
\caption {The rules for computing effect tree over $k$-dimensional array {\arname A}}\label{fig:effrules}
\end{figure}

The implementation of function \opname{Seq} is straight. To compute \opname{Seq}($T_1$,$T_2$) we simply replace all $\bot$-s in $T_2$ with a copy of $T_1$. Then the result is simplified by a function \opname{Prune} which prunes unreachable branches by checking the consistency of affine condition sets (the check is known as Omega-test \cite{DBLP:conf/sc/Pught91}).

The operation $\mbox{\opname{Fold}} (\varvar v,e_1,e_2,T)$, where $\varvar v$ is a variable and $e_1$ and $e_2$ are affine expressions, produces S-tree $T'$ that does not contain $\varvar v$ and represents the following function. Given values of all other parameters, $T'$ evaluates to
\begin{equation}
\max \{v \in [e_1,e_2] \mid T(v) \mbox{ evaluates to a term } \}
			\label{eq:fold}
\end{equation}
Making this $T'$ usually involves solving some 1-D parametric integer programming problems and combining the results.

\subsection{Graph Node Structure}\label{ss:eff:node}

In parallel with building the effect of each statement we also compose a graph skeleton, which is a set of nodes with placeholders for future links. For each assignment a separate node is created. At this stage the graph nodes are associated with AST nodes, or statements, in which they were created, for the purpose that will be explained in Section \ref{s:state}. The syntax of a graph node description is presented in Fig.\ref{fig:syngnode}. 

\begin{figure}[htb]
\begin{center}\small
\begin{tabular}{l}
\nt{node} ::=    (\gtag{node} (\nt{name} \nt{context}) \\
 \qquad \qquad \qquad (\gtag{dom} \nt{conditions} )    \\
 \qquad \qquad \qquad (\gtag{ports} \nt{ports})        \\
 \qquad \qquad \qquad (\gtag{body} \nt{computations})  \\
 \qquad \qquad \qquad )                              \\
\nt{context} ::= \nt{names}                          \\
%\nt{condition} ::= \nt{L-cond} $\mid$ \nt{TF-tree}        \\
\nt{port} ::= (\nt{name} \nt{type} \nt{source})      \\
\nt{computation} ::= (\gtag{eval} \nt{name} \nt{type} \nt{expression} \nt{destination}) \\
\nt{source} ::= \nt{S-tree} $\mid$ IN                     \\
\nt{destination} ::= \nt{M-tree} $\mid$ OUT               \\
\end{tabular}
\end{center}
\caption {Syntax for graph node description}\label{fig:syngnode} 
\end{figure}

Non-terminals ending with {\em -s} usually denote a repetition of its base word non-terminal, e.g., \nt{ports} signifies \nt{list of ports}. A node consists of a header with name and context, domain description, list of ports that describe inputs and a body that describes output result. The context here is just a list of loop variables surrounding the current AST node. The domain specifies a condition on these variables for which the graph node instance exists. Besides context variables it may depend on structure parameters. Ports and body describe inputs and outputs. The \nt{source} in a port initially is usually an atom $\term{\arname A}{\ser ek}$ (or, generally, an S-tree) depicting an array access ${\arname A}(\ser ek)$, which must be eventually resolved into a S-tree referencing other graph nodes as the true sources of the value (see Section \ref{ss:st:read}).  A computation consists of a local name and type of an output value, an expression to be evaluated, and a destination placeholder $\bot$ which must be replaced eventually by a M-tree that specifies output links (see Section \ref{s:bdf}). The tag IN or OUT declares the node as input or output respectively. 

Consider the statement {\code S=S+X(i)} from program in Fig.8a. The initial view of its graph node is shown in Fig.\ref{fig:inignode}. The expression in \eval\ clause is built from the rhs by replacing all occurrences of scalar or array element with their local names (that became port names as well). A graph node for assignment has a single \eval\ clause and acts as a generator of values written by the assignment. Thus, a term of an effect tree may be considered as a reference to a graph node.
\begin{figure}[htb]
\begin{center}
\begin{tabular}{l}
  (\gtag{node} \garg{(\id{S1}\; i)} \\
  \qquad \qquad   (\gtag{dom} \garg{(i \ge 1) (i\le n)}) \\
  \qquad \qquad   (\gtag{ports} \garg{(\id{s1} \real \term{\id{S}}{})\, (\id{x1} \real \term{\id{X}}{i})})\\
  \qquad \qquad   (\gtag{body} (\gtag{eval} \garg {S \real (\id{s1} + \id{x1})\; \bot}) )\\
  \qquad \qquad   )
\end{tabular}
\end{center}
\caption {An initial view of graph node for statement \code{S=S+X(i)}} \label{fig:inignode}
\end{figure}
\subsection{Processing Non-affine Conditionals}\label{ss:eff:na}

When the source program contains a non-affine conditional statement $X$, special processing is needed. We add a new kind of condition, a predicate function call, or simply predicate, depicted as
\begin{equation}
			\pred{\nt{name}}{\nt{bool-const}}{\nt{L-exprs}}
			\label{eq:pred}
\end{equation}
that may be used everywhere in the graph where a normal affine condition can. It contains a name, sign T or F (affirmation or negation) and a list of affine arguments.

However, not all operations can deal with such conditions in argument trees. In particular, the \opname{Fold} cannot. Thus, we eliminate all predicates immediately after they appear in the effect tree of a non-affine conditional statement.

First, we drag the predicate $p$, which is initially on the top of the effect tree $\Eff A{X} = \cond p{T_1}{T_2}$, downward to leaves. The rather straightforward process is accomplished with pruning. In the result tree, $T_X$, all copies of predicate $p$ occur only in downmost positions of the form $\cond{p}{A_1}{A_2}$, where each $A_i$ is either term or $\bot$. We call such conditional sub-trees \emph{atomic}. In the worst case the result tree will have a number of atomic sub-trees being a multiplied number of atoms in sub-trees $T_1$ and $T_2$.

Second, each atomic sub-tree can now be regarded as an indivisible composite value source. When one of $A_i$ is $\bot$, this symbol depicts an implicit rewrite of an old value into the target array cell ${\arname A}(\ser qk)$ rather than just ``no write''. With this idea in mind we now replace each atomic sub-tree $U$ with a new term $\term{\varvar U_\mathrm{new}}{\ser in}$ where argument list is just a list of variables occurring in the sub-tree $U$. Simultaneously, we add the definition of ${\varvar U_\mathrm{new}}$ in the form of a graph node (associated with the conditional statement $X$ as a whole) which is shown in Fig.\ref{fig:inibnode}. This kind of nodes will be referred to as blenders as they blend two input sources into a single one.
\begin{figure}[htb]
\begin{center}
\begin{tabular}{l}
(\gtag{node} (${\varvar U_\mathrm{new}}$ $i_1\dots i_n$) \\
  \qquad \qquad                (\gtag{dom} $\Dom(X) + \mbox{path-to-$U$-in-$T_X$}$) \\
  \qquad \qquad                (\gtag{ports} ($\id{a}\;t\;\cond{p}{\opname{RW}(A_1)}{\opname{RW}(A_2)}$)\\
  \qquad \qquad                (\gtag{body} (\gtag{eval} $a\:t\:a\:\bot$) )\\
  \qquad \qquad               )
\end{tabular}
\end{center}
\caption {Initial contents of the blender node for atomic subtree $U$ in $\Eff{A}{X}=T_X$} \label{fig:inibnode}
\end{figure}
The domain of the new node is that of statement $X$ restricted by conditions on the path to the sub-tree $U$ in the whole effect tree $T_X$. The result is defined as just copying the input value $a$ (of type $t$). The most intriguing is the source tree of the sole port $a$. It is obtained from the atomic sub-tree $U=\cond{p}{A_1}{A_2}$. Each $A_i$ is replaced (by operator \opname{RW}) as follows. When $A_i$ is a term it remains unchanged. Otherwise, when $A_i$ is $\bot$, it is replaced with explicit reference to the array element being rewritten, $\term{\arname A}{\ser qk}$. However, an issue arises: variables $\ser qk$ are undefined in this context. The only variables allowed here are $\ser in$ (and fixed structure parameters). Thus we need to express indexes $\ser qk$ through ``known'' values $\ser in$.

To resolve this issue consider the list of (affine) conditions $L$ on the path to the subtree $U$ in the whole effect tree $T_X$ as a set of equations connecting variables $\ser qk$ and $\ser in$.  

\begin{proposition}
Conditions $L$ specify a unique solution for values $\ser qk$ depending on $\ser in$. \label{prop2}
\end{proposition}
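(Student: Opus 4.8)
The plan is to reduce the statement to a structural invariant of effect trees and prove that invariant by induction on the construction rules of Fig.~\ref{fig:effrules}. Concretely, I would strengthen the claim to the following form, which is what lets the induction close: \emph{for every statement $X$ and every root-to-leaf path in $\Eff A{X}$ ending in a term $\term{LA}{\dots}$, the affine conditions along that path, regarded as a linear system in the query indices $\ser qk$ with all remaining variables (surrounding loop counters, fold- and division-introduced variables) treated as parameters, determine $\ser qk$ uniquely, and the unique value of each $q_j$ is an affine expression in the variables that actually occur in the leaf term.} The guiding semantic fact is that a single write instance writes to exactly one array cell, whose indices are affine functions of the write's iteration vector; the leaf term encodes that iteration vector, so the query indices are forced. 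The role of the induction is to verify that these forcing equations really survive as part of the path-condition set $L$.

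The base case is the assignment rule: the tree $\cond{\varvar q_1=e_1}{\dots}{\bot}$ places exactly the equations $q_j=e_j(\varvar I)$ on the path to $\term{LA}{\varvar I}$, with each $e_j$ affine in the loop variables $\varvar I$, which are precisely the term variables, giving the unique affine solution directly. For a sequence, \opname{Seq} grafts a copy of $\Eff A{X_1}$ at the $\bot$-leaves of $\Eff A{X_2}$, so every term-leaf of the result inherits, verbatim, the $q$-determining equations of the branch it came from (plus possibly extra, non-conflicting constraints from the outer path); the invariant is preserved, and \opname{Prune} only discards inconsistent branches. For an affine conditional the operation merely prefixes $c$ or its negation to every path, leaving the $q$-equations intact.

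The main obstacle is the \opname{Fold} case, since folding eliminates the loop variable $\varvar v$ and re-parameterizes the tree. In the body tree the invariant gives, on a term path, equations pinning $\ser qk$ as affine functions of the term variables, among which $\varvar v$ may appear. \opname{Fold} selects the optimal value $v^\ast$ of~(\ref{eq:fold}), which parametric integer programming returns as a piecewise-affine function of the remaining parameters, each affine piece yielding one family of leaves. Substituting $\varvar v := v^\ast$ into the body's $q$-equations keeps every $q_j$ an affine function of the surviving term variables. The delicate subcase is when a body equation had the form $q_j=\varvar v$ (or, more generally, when $e_j$ depended on $\varvar v$): then $v^\ast$ is forced to equal $q_j$, and after substitution $q_j$ itself becomes an argument of the leaf term, so the equation degenerates to the trivial identity $q_j=q_j$. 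This is exactly why the invariant is phrased ``affine in the variables occurring in the term'': the degenerate case is absorbed as a trivial unique solution rather than a lost one. Integer-division vertices are handled identically, since the introduced quotient and remainder are determined affinely by the divided expression.

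It remains to transfer the invariant to the atomic subtree $U=\cond{p}{A_1}{A_2}$. Dragging the predicate $p$ downward from $\Eff A{X}=\cond p{T_1}{T_2}$ only duplicates the predicate node and prunes inconsistent affine branches; it introduces no new affine condition, so the affine part $L$ of the path to $U$ is inherited from $T_1$ and $T_2$. Since $U$ survives pruning, at least one of $A_1,A_2$ is a term (otherwise $U$ would simplify to $\bot$); say $A_1=\term{LA}{\dots}$. Applying the invariant to that branch, the equations of $L$ coming from $A_1$ already fix $\ser qk$ as affine functions of the variables of $A_1$, all of which occur in $U$ and hence lie among $\ser in$; this gives uniqueness. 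Existence holds because the blender's domain is precisely the region where $L$ is consistent, so for admissible $\ser in$ a solution exists. Finally, when both $A_1$ and $A_2$ are terms they describe writes to the same queried cell, and consistency of $L$ (guaranteed by pruning) forces their two affine determinations of $\ser qk$ to coincide, so no ambiguity remains.
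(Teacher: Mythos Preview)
Your proof is correct, but it takes a substantially different route from the paper's.

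The paper gives a four-line semantic argument: at least one branch of $U$, say $A_j$, is a term; that term names a concrete write instance once its free variables $\opname{Vars}(A_j)$ are fixed; a single write instance touches exactly one array cell; hence the cell indices $\ser qk$ are functions of $\opname{Vars}(A_j)\subseteq\{\ser in\}$. No induction, no case analysis on the constructors of Fig.~\ref{fig:effrules} --- just the observation that an effect-tree leaf denotes one write and one write hits one cell.

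You instead isolate a syntactic invariant (``the path conditions pin $\ser qk$ affinely in the term variables'') and push it through the effect-tree constructors, with \opname{Fold} as the non-trivial case. This is sound: your treatment of the degenerate sub-case where a $q_j$ survives into the leaf after folding is exactly what is needed, and your final transfer to the atomic subtree via the predicate-dragging step matches the paper's conclusion. What your approach buys is constructivity: it shows explicitly that the unique solution is piecewise affine in $\ser in$ and is read off from the path equations, which directly justifies the paper's subsequent remark that ``the unique solution \dots\ can be easily found by our affine machinery.'' The paper's argument, by contrast, establishes uniqueness abstractly and defers the actual extraction to that machinery. The cost of your approach is length and the need to reason informally about the internals of \opname{Fold}, which the paper never spells out; the paper sidesteps that entirely by arguing at the level of what effect trees \emph{mean} rather than how they are built.
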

\begin{proof} 
Consider the other branch $A_j$ of subtree $U$, which must be a term. We prove a stronger statement, namely, that given exact values of all free variables occurring in $A_j$, $\opname{Vars}(A_j)$, all $q$-s are uniquely defined. The term $A_j$ denotes the source for array element ${\arname A}(\ser qk)$ within some branch of the conditional statement $X$. Note, however, that this concrete source is a write on a single array element only. Hence, array element indexes $\ser qk$ are defined uniquely by $\opname{Vars}(A_j)$. Now recall that all these variables are present in the list $\ser in$ (by definition of this list). \qed
\end{proof}

Now that the unique solution does exist, it can be easily found by our affine machinery. See Section \ref{s:bdf} in which the machinery used for graph inversion is described.

Thus, we obtain, for conditional statement $X$, the effect tree that does not contain predicate conditions. All predicates got hidden within new graph nodes. Hence we can continue the process of building effects using the same operations on trees as we did in the purely affine case. Also, for each predicate condition a node must be created that evaluates the predicate value. 
We shall return back to processing non-affine conditionals in Section \ref{ss:bdf:na}.

\section{Evaluation and Usage of States}\label{s:state}
\subsection{Computing States}\label{ss:st:comp}

A state before statement $(s,\varvar I_s)$ in affine program fragment $P$ with respect to array element ${\arname A}(\ser qk)$ is a function that takes as arguments the iteration vector $I_s = (\ser in)$, array indexes $(\ser qk)$ and values of structure parameters and yields the write $(w, I_w)$ in the computation of $P$ that is the last among those that write to array element ${\arname A}(\ser qk)$ before $(s,I_s)$.

In other words this function presents an effect (over array {\arname A}) of executing the program from the beginning up to the point just before $(s,I_s)$. It can be represented as an S-tree, $\State{A}{s}$, called a \emph{state tree} at program point before statement $s$ for array {\arname A}. It can be computed with the following method.

%To compute state trees for each program point we use the following method. 

%So far for each statement $B$ in an affine program fragment $P$ we have computed the S-tree $\Eff{A}{B}$ representing the effect of $B$ over array {\arname A}. Now we are to compute for each statement $B$ the S-tree $\State{A}{B}$ representing the state before $B$ over array {\arname A}. 

For the starting point of program $P$ we set {\small
\begin{equation}
\State{A}{P}=\cond{\varvar q_1\!\ge \! l_1} 
                  {\cond {\varvar q_1\!\le \! u_1} {\dots \term{\id{A_{ini}}}{q_1,\dots} \dots } {\!\bot} } 
                  {\!\bot}
\label{eq:inistate}
\end{equation}
}where term $\term{\id {A_{ini}}}{\ser qk}$ signifies an untouched value of array element ${\arname A}(\ser qk)$ and $l_i$, $u_i$ are lower and upper bounds of the $i$-th array dimensions (which must be affine functions of fixed parameters). Thus, (\ref{eq:inistate}) means that all {\arname A}'s elements are untouched before the whole program $P$.

The further computation of $\Statef A$ is described by the following production rules:
\begin{enumerate} 
\item 
Let $\State A{B_1;B_2}=T$. 
Then $\State A{B_1}=T$. 
\emph{The state before any prefix of $B$ is the same as that before $B$.}
\item 
Let $\State A{B_1;B_2}=T$. 
Then $\State A{B_2}=\opname{Seq}(T, \Eff A{B_1})$.
\emph{The state after the statement $B_1$ is that before $B_1$ combined by \opname{Seq} with the effect of $B_1$.} 
\item 
Let $\State A{\mbox{\code if $c$ then $B_1$ else $B_2$ endif}} = T$. 
Then $\State A{B_1}=\State A{B_2}=T$. 
\emph{The state before any branch of {\code if}-statement is the same as before the whole {\code if}-statement.}
\item 
Let $\State A{\mbox{\code do $\varvar v=e_1,e_2$; $B$; enddo}} = T$. 
Then $\State A{B}= \opname{Seq}(T, \opname{Fold}(\varvar v,e_1,\varvar v\!-\!1,\Eff A{B}))\label{}$.
\emph{The state before the loop body $B$ with the current value of loop variable $\varvar v$ is that before the loop combined by \opname{Seq} with the effect of all preceding iterations of $B$.}
\end{enumerate}
The form in rule 4 worth some comments. Here the upper limit in the \opname{Fold} clause depends on $\varvar v$. To be formally correct, we must replace all other occurrences of $\varvar v$ in the clause with a fresh variable, say $\varvar v'$. Thus, the resulting tree will (generally) contain $\varvar v$, as it expresses the effect of all iterations of the loop before the $v$-th iteration. 

Using the rules 1-4 one can compute the state in any internal point of the program $P$. The following proposition limits the usage, within a state tree $T$, of terms whose associated statement is enclosed in a conditional with non-affine condition. It will be used further in Section \ref{ss:bdf:na}. 

\begin{proposition}
Let a conditional statement $X$ with non-affine condition be at a loop depth $m$ within a dynamic control program $P$. Consider a state tree $T_p = \State A{p}$ in a point $p$ within $P$ over an array {\arname A}. Let $\term A {\ser {i}{k}}$ be a term in $T_p$, whose associated statement, also $A$, is inside a branch of $X$. Then the following claims are all true:
\begin{itemize}
\item	$m \le k$,
\item	$p$ is inside the same branch of $X$ and
\item	indexes $\ser{i}{m}$ are just variables of loops enclosing $X$.
\end{itemize}
\label{prop:state}
\end{proposition}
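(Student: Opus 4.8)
The plan is to combine two ingredients established earlier. The first is the predicate-elimination construction of Section~\ref{ss:eff:na}: after it is applied, the effect tree $\Eff A X$ of the non-affine conditional $X$ contains no bare term that names a write occurring \emph{inside} $X$, because every atomic subtree carrying the predicate has been replaced by a fresh blender term $\varvar U_{\mathrm{new}}$, with the original write terms buried inside the blender node's port rather than in the propagated effect. The second is an exact accounting of which \opname{Fold}s are applied as a state tree is pushed downward by rules~1--4 of Section~\ref{ss:st:comp}. Write $\ser v m$ for the variables of the $m$ loops enclosing $X$. The first claim, $m\le k$, is then immediate: by the assignment rule of Fig.~\ref{fig:effrules} the term for the write $A$ is created as $\term A{\varvar I}$ with $\varvar I$ the full list of loop variables enclosing $A$, and none of \opname{Seq}, \opname{Fold} or \opname{Prune} alters a term's arity, so $k$ equals the loop depth of $A$; since every loop enclosing $X$ also encloses $A$, this depth is at least $m$.

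For the second claim I would trace the only route by which a bare $\term A{\cdot}$ can enter a state tree. Such a term is minted nowhere except in $\Eff A A$, and state trees are assembled from effect trees solely through \opname{Seq} and \opname{Fold}; hence it suffices to locate where $\Eff A A$ is admitted un-blended. I would first verify that the state before $X$, namely $\State A X$, carries no bare $A$-term: the only delicate case is when $X$ sits inside the enclosing loops, so that rule~4 folds the effect of earlier iterations of those loops into $\State A X$, but each such loop body sees $X$ only through the already-blended $\Eff A X$, so no bare $\term A{\cdot}$ survives. By rule~3 the heads of both branches inherit $\State A X$ and are likewise clean. Consequently the $A$-term in $T_p$ must be supplied by a \opname{Seq} with an effect computed strictly inside $X$ at the current activation, after $A$ has executed; since the effects of statements in the branch not containing $A$ never mention $A$, the point $p$ is forced into the same branch as $A$.

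The third claim then follows by the promised bookkeeping. Following the state rules from $\Eff A A$ up to $T_p=\State A p$ within the common branch, every \opname{Fold} that is applied is taken over a loop lying between $X$ and $p$ --- that is, a loop strictly inside $X$, of depth greater than $m$ --- because the propagation stays inside $X$ and never ascends above it. Each such $\opname{Fold}(\varvar w,\dots)$ eliminates only its own inner variable $\varvar w$, turning the matching term argument into an affine expression, and leaves the enclosing variables $\ser v m$ as untouched free context. The first $m$ arguments of $\term A{\ser i k}$ are exactly the slots of the loops enclosing $X$, so they remain the literal variables $\ser v m$, which is the third claim.

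The main obstacle is the leakage lemma hidden in the second claim: making rigorous that no bare $A$-term can escape from $X$ into a state tree read at a point outside the branch containing $A$. This demands a careful induction over rules~1--4 showing that any occurrence of $\Eff A Y$ for a statement $Y$ properly containing $X$ is admitted only through the blended $\Eff A X$; the subtle interaction is precisely rule~4 applied to a loop that encloses $X$, where one might naively fear that an earlier iteration's execution of $A$ resurfaces as a bare term, whereas in fact it has already been absorbed into a blender. Once this lemma is secured, the arity count and the fold bookkeeping make the remaining two claims routine.
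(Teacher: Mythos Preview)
Your proposal is correct and follows essentially the same route as the paper: both argue that the predicate-elimination of Section~\ref{ss:eff:na} ensures $\Eff A X$ (and hence the effect of any statement containing $X$, and hence the state $\State A X$) carries no bare $A$-term, then decompose $T_p$ as $\opname{Seq}(\State A X, T_{X\text{-}p})$ so that the $A$-term must come from the in-branch effect $T_{X\text{-}p}$, which forces $p$ into the same branch and leaves the first $m$ indices as the untouched outer loop variables. You are more explicit than the paper in isolating the ``leakage lemma'' and the rule-4 subtlety, and you separate out the arity argument for $m\le k$, but the underlying argument is the same.
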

\begin{proof}
Let $A$ be a term name, whose associated statement $A$ is inside a branch $b$ of a conditional statement X with non-affine condition. It is either assignment to an array, say {\arname A}, or a blender node emerged from some inner conditional (performing a "conditional assignment" to {\arname A}). From our way of hiding predicate conditions described in Section \ref{ss:eff:na} it follows that the effect tree of $X$, $\Eff A{X}$, as well as of any other statement containing $X$, will not contain a term with name $A$. Hence, due to our way of building states from effects described above, this is also true for the state tree of any point outside $X$, including the state $T_X$ before the $X$ itself. Now, consider the state $T_p$ of a point $p$ within a branch $b_1$ of $X$. (Below we'll see that $b_1 = b$). We have
\begin{equation}
T_p = \opname{Seq} (T_X, T_{X-p}),			 \label{eq:seqstate}
\end{equation}
where $T_{X-p}$ is the effect of executing the code from the beginning of the branch $b_1$ to $p$ (recall that the state before the branch $b_1$, $T_{b_1}$, is the same as $T_X$ according to Rule 3 above). Consider a term $\term A {\ser ik}$ in $T_p$. As it is not from $T_X$, it must be in $T_{X-p}$. 
Obviously, $T_{X-p}$ contains only terms associated with statements of the same branch with $p$. Thus, $b_1=b$. And these terms are only such that their initial $m$ indexes are just variables of $m$ loops surrounding $X$. Thus, given that the operation \opname{Seq} does not change term indexes, we have the conclusion of Proposition \ref{prop:state}. \qed
\end{proof}

\subsection{Resolving Array Accesses}\label{ss:st:read}
Using states before each statement we can build the source graph $\FP$. Consider a graph node $X$. So far, a \nt{source} in its \ports\ clause contains terms representing array access, say $\term{\arname A}{\ser ek}$. Recall that the node $X$ is associated with a  point $p$ in the AST. We take the state $\State A{p}$ and use it to find the source for our access indexes $(\ser ek)$ (just doing substitution followed by pruning). The  resulting S-tree replaces original access term. Doing so with each array access term we obtain the source graph $\FP$.

Recall that each graph node $X$ has a domain $\Dom(X)$ which is a set of possible context vectors. It is specified by a list of conditions, which are collected from surrounding loop bounds and {\code if} conditions. We write $D\implying p$ to indicate that the condition $p$ is valid in $D$ (or follows from $D$). In case of a predicate condition $p=\pred{\varvar p}{b}{\ser ek}$ it implies that the list $D$ just contains $p$ (up to equality of $e_i$). For a S-graph built so far, the following proposition limits the usage of atoms $\term A{\dots}$ whose $\Dom(A)$ has a predicate condition.

\begin{proposition}
Suppose that $B$ is a regular node (not a blender) whose source tree $T$ contains a term $\term A{\ser ik}$ (refering to an assignment to an array {\arname A}). Let {\em $\Dom(A) \implying p$}, where $p=\pred{\varvar p}{b}{\ser jm}$ is a predicate condition. Then:% the following claims are all true:
\begin{itemize}
\item $m\le k$,
\item $j_1=i_1,\dots, j_m=i_m$, and all these are just variables of loops enclosing the conditional with predicate $p$,
\item {\em $\Dom(B)\implying p$}. 
\end{itemize}
\label{prop:src}
\end{proposition}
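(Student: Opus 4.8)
The plan is to reduce Proposition~\ref{prop:src} to Proposition~\ref{prop:state} by tracing the term $\term A{\ser ik}$ back from the source tree into the state tree from which it was produced. First I would recall from Section~\ref{ss:st:read} that the source tree $T$ of the regular node $B$ is built from the state tree $\State A{B}$ taken at the AST point of $B$, by substituting the access indexes $(\ser ek)$ for the query indexes $(\ser qk)$ and then pruning. Consequently every term of $T$ descends from a term of the same name in $\State A{B}$, its index list possibly specialized by the substitution $\varvar q_i := e_i$. In particular, the term $\term A{\ser ik}$ in $T$ comes from some term $\term A{\dots}$ in $\State A{B}$.

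Next I would identify the conditional behind the predicate. Since $A$ is an assignment to {\arname A} and $\Dom(A)\implying p$ with $p=\pred{\varvar p}{b}{\ser jm}$, the statement $A$ lies inside the branch $b$ of the non-affine conditional $X$ whose condition produced the predicate $\varvar p$. By the way predicate nodes are created in Section~\ref{ss:eff:na}, the argument list $\ser jm$ of $p$ is exactly the list of loop variables enclosing $X$; hence the loop depth of $X$ is $m$, which matches the $m$ of Proposition~\ref{prop:state}. I am then precisely in the hypotheses of that proposition, applied to the state tree $\State A{B}$ and its term $\term A{\dots}$ associated with the statement $A$ inside $X$. Proposition~\ref{prop:state} would then supply three facts at once: $m\le k$; the node $B$ lies inside the same branch $b$ of $X$; and the first $m$ indexes of the state-tree term are just the variables of the $m$ loops enclosing $X$.

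From these I would read off the three required conclusions. The bound $m\le k$ is immediate. For the index equalities, I would observe that the first $m$ indexes of the state-tree term are pure enclosing-loop variables, so they contain no occurrence of the query indexes $\ser qk$; the substitution $\varvar q_i := e_i$ therefore leaves them untouched, and $\ser im$ remain exactly the loop variables enclosing $X$. Since these are the same variables that constitute the predicate arguments $\ser jm$, I obtain $j_1=i_1,\dots,j_m=i_m$. Finally, because $B$ lies inside the branch $b$ of $X$, its domain is collected together with the corresponding predicate condition, so $\Dom(B)\implying p$.

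The main obstacle I anticipate is the bookkeeping across the index substitution and pruning that turn the state tree into the source tree: I must make sure that the first $m$ indexes guaranteed by Proposition~\ref{prop:state} survive this passage unchanged and line up with the predicate's argument list, and that pruning does not discard the relevant branch. This is also where the hypothesis that $B$ is \emph{not} a blender matters, since it guarantees that $T$ arises from the ordinary state-tree resolution of Section~\ref{ss:st:read} rather than from the special blender construction of Section~\ref{ss:eff:na}, for which the predicate would instead appear explicitly in the source tree and this argument would not apply.
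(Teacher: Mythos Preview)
Your proposal is correct and follows essentially the same route as the paper's own proof: trace the term $\term A{\ser ik}$ back to a term $\term A{\ser{i'}k}$ in the state tree $\State A{B}$, invoke Proposition~\ref{prop:state} to obtain $m\le k$, that $B$ lies in the same branch of $X$, and that $i'_1,\dots,i'_m$ are the enclosing loop variables $\ser jm$, then observe that the substitution resolving the array access replaces only the formal $q$-indexes and leaves those loop variables intact. Your additional remarks on pruning and on why the non-blender hypothesis is needed are sound elaborations, but the core argument is the same as the paper's.
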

\begin{proof}
As $\Dom(A)\implying \pred{\varvar p}{b}{\ser jm}$, the predicate $p$ denotes the condition of a conditional statement $X$ enclosed by $m$ loops with variables $\ser jm$, and this $X$ contains the statement $A$ in the branch $b$ (by construction of $\Dom$). The source tree $T$ was obtained by a substitution into the state tree before $B$, $T_B=\State A{B}$, which must contain a term $\term A{\ser {i'}k}$. It follows, by Proposition \ref{prop:state}, that statement $B$ is inside the same branch $b$ (hence, $\Dom(B)\implying p$),  $m\le k$ and $\ser {i'}m$ are just variables $\ser jm$. However the substitution replaces only formal array indexes and does not touches enclosing loop variables, here $\ser jm$. Hence $i'_1=i_1,\dots,i'_m=i_m$. \qed
\end{proof}

When $B$ is a blender the assertion of the Proposition \ref{prop:src} is also valid but $\Dom(B)$ should be extended with conditions on the path from the root of the source tree to the term $\term A{\dots}$. The details are left to the reader. 

\section{Building the Dataflow Model}\label{s:bdf}
\subsection{Inverting S-graph: Affine Case}\label{ss:bdf:u}

In our dataflow computation model a node producing a data element must know 
exactly which other nodes (and by which port) need this data element, and send it to all such ports. 
This information should be placed, in the form of destination M-tree, into the \eval\ clause of each graph node instead of initial placeholder $\bot$. These M-trees are obtained by inversion of S-graph.

%Suppose the source program fragment $P$ is purely affine. 

First, we split each S-tree into paths. Each path starts with header term $\term R{\ser in}$, containing the list of independent variables, ends with term $\term W{\ser em}$ and has a list of affine conditions interleaved with division clauses like ($e=:k\varvar q+\varvar r$) ($k$ is a literal integer here). In all expressions, variables may be either from header or defined by division earlier. 
The \opname{InversePath} operation produces the inverted path that starts with header term $\term W{\ser jm}$ with new independent variables $\ser jm$, ends with term $\term R{\ser fn}$ and has a list of affine conditions and divisions in between. The inversion is performed by variable elimination. When a variable cannot be eliminated it is simply introduced with clause $(\verb"@" \varvar v)$.

All produced paths are grouped by new headers, each group being an M-tree for respective graph node, in the form $(\&\; T_1\; T_2 \dots)$ where each $T_i$ is a 1-path tree. Further, the M-tree is simplified by the operation \opname{SimplifyTree}. This operation also involves finding bounds for \verb"@"-variables, which are then included into \verb"@"-vertices in the form:
$$(\verb"@"\varvar v (l_1 u_1) (l_2 u_2) \dots T)$$
where $l_i,u_i$ are affine lower and upper bounds of $i$-th interval, and $v$ must belong to one of the intervals. 

\subsection{Inverting S-graph for Programs with Non-affine Conditionals}\label{ss:bdf:na}

When program $P$ has non-affine conditionals the above inversion process will probably yield some M-trees with predicate conditions. The node with such M-tree need an additional port for the value of predicate. We call such nodes \emph{filters}. The simplest filter has just two ports, one for the main value and one for the value of the predicate, and sends the main value to the destination when the predicate value is true (or false) and does nothing otherwise. Splitting nodes with complex M-trees we can always reduce our graph to that with only simplest filters.

Generally, the domain of each arrow and each node may have several functional predicates in the condition list. Normally, an arrow has the same list of predicates as its source and target nodes. However, sometimes these lists may differ by one item. Namely, a filter node emits arrows with a longer predicate list whereas the blender node makes the predicate list one item shorter compared to that of incoming arrow. In the examples below both green (dotted) and red (dashed) arrows have additional predicate in their domain.

However, our aim is to produce not only U-graph, but both S-graph and U-graph which must be both complete and mutually inverse. Thus, we prefer to update the S-graph before inversion such that inversion would not produce predicates in M-trees. To this end, we check for each port whether its source node has enough predicates in its domain condition list. When we see that the source node has less predicates, then we insert a filter node before that port. And the opposite case, that the source has more predicates, is impossible, as it follows immediately from Proposition \ref{prop:src}. 

\section{Examples}\label{s:example}
A set of simple examples of a source program (subroutine) with the two resulting graphs -- S-graph and U-graph -- are shown in Figs. 8,9,11. All graphs were generated as text and then redrawn graphically by hand. Nodes are boxes or other shapes and data dependences are arrows between them. Usually a node has several input and one output ports.
The domain is usually shown once for a group of nodes (in the upper side in curly braces). The groups are separated by vertical line. Each node should be considered as a collection of instance nodes of the same type that differ in domain (context) parameters from each other. Arrows between nodes may fork depending on some condition (usually it is affine condition of domain parameters), which is then written near the start of the arrow immediately after the fork. When arrow enters a node it carries a new context (if it has changed) written there in curly braces. The simplest and purely affine example in Fig.\ref{fig:sum} explains the notations. Arrows in the S-graph are directed from a node port to its source. The S-graph arrows can be interpreted as the flow of requests for input values. (See Section \ref{ss:prop:sre} for details).

\begin{figure}[hb]
\begin{center}
\includegraphics[width=0.47\textwidth]{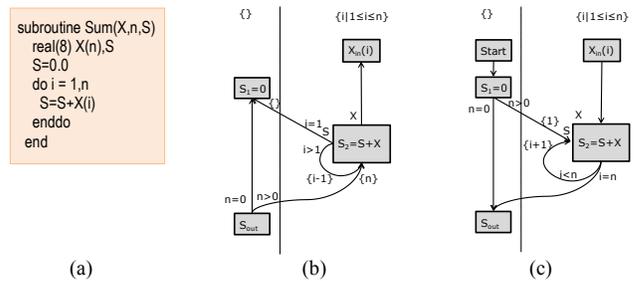}
\end{center}
\caption{Fortran program Sum (a), its S-graph (b) and U-graph (c)}
\label{fig:sum}
\end{figure}

In the U-graph arrows go from node output to node input. In contrast with the S-graph, they denote actual flow of data. The U-graph semantics is described in Section \ref{ss:prop:dflow}.

In the U-graph we need to get rid of zero-port nodes which arise from assignments with constant rhs. We insert into them a dummy port that receives a dummy value. Thus a node {\code Start} sending a token to node S1 appeared in Fig.\ref{fig:sum}c.

A simplest example with non-affine conditions is shown on Fig.\ref{fig:max}. Here appears a new kind of node, the blender, depicted as a blue truncated triangle (see Fig. \ref{fig:max}b). Formally, it has a single port, which receives data from two different sources depending on the value of the predicate. Thus, it has an implicit port for Boolean value (on top). The main port arrows go out from sides; true and false arrows are dotted green and dashed red respectively. 
\begin{figure}
\begin{center}
\includegraphics[width=0.47\textwidth]{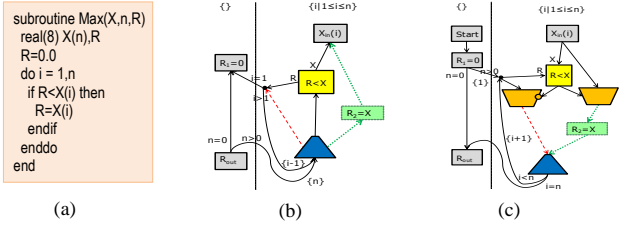}
\end{center}
\caption{Fortran program Max (a), its S-graph (b) and U-graph (c)}
\label{fig:max}
\end{figure}

In the U-graph the blender does not use a condition: in either case it gets a value on its single port without knowing which node has sent it and under which condition. However, as the source itself is not under the needed condition, a filter node must be inserted in between the source node and the receiver port (it is shown in Fig.\ref{fig:sum}c as an inverted orange trapezoid). A circle at the entry point means that the filter is open when the condition is \emph{false}.

A more interesting example, a bubble sort program and its graphs, is shown in Fig.\ref{fig:Bubble}. In contrast with previous ones, this U-graph exhibits high parallelism: the parallel time is $2n$ instead of $n(n+1)/2$ for sequential execution. 

\begin{figure}
\begin{center}
\includegraphics[width=0.47\textwidth]{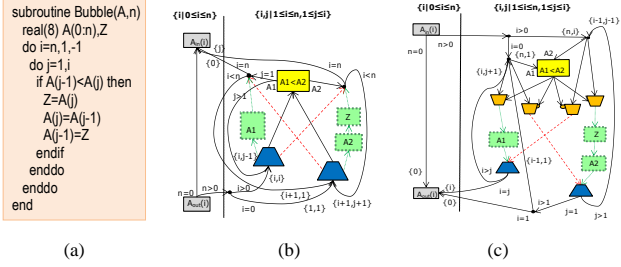}
\end{center}
\caption{Fortran program Bubble (a), its S-graph (b) and U-graph (c)}
\label{fig:Bubble}
\end{figure}

\section{Usage of Polyhedral Model}\label{s:prop}
\subsection{General Form of Dataflow Graph}\label{ss:prop:form}
The general syntax of PM format is shown in Fig.\ref{fig:syngnode}. 
The S-graph is comprised of port source S-trees (single-valued), whereas the U-graph of destination M-trees (multi-valued).
Both graphs must be mutually inverse, i.e., they represent the same dependence relation.

Some nodes produce Boolean values, which can be used as \emph{predicates}. 
In S-graph, they are alowed in a \emph{blender}, which is an identity node with unique port source tree 
of the form $\cond{\pred{\varvar p}{b}{\ser ek}}{T_1}{T_2}$.
In U-graph, we forbid predicates in destination trees, but we allow \emph{filter} nodes, which are in some sense inverse to blenders. Instead of destination tree of the form $\cond {\pred{\varvar p}{b}{\ser ek}}{T_{out}}{\bot}$ they have an additional boolean port $p$ with source $\term P{\ser ek}$ and the destination tree with just $p$ as condition. Thus, filter is a gate which is open or closed depending on the value at port $p$. Note that filters are needed in U-graph, but not in S-graph.

The S-graph must satisfy the two following constraints. The first is a consistency restriction. Consider a node $\term{\id{X}}{\varvar I}$ with domain $D_{\id X}$ and a source tree $T$. Let $I\in D_{\id X}$. Then $T(I)$ is some atom $\term{\id{Y}}{J}$ such that $J\in D_{\id Y}$. The second constraint requires that the S-graph must be \emph{well-founded}, which means that no one object node $\term{\id{X}}{I}$ may transitively depend on itself. 

\subsection{Using the S-graph as a Program}\label{ss:prop:sre}
The S-graph can be used to evaluate output values given all input values (and structure parameters). For simplicity, we assume that each node produces a single output value.

Following \cite{Feautrier:2001:ADA:380466.380472} we transform the S-graph into a system of recurrence equations (SRE), which can be treated as a recursive functional program. 
%Each node of S-graph is treated as a recursively defined function of context variables. Its right hand side is composed of a body expression, where ports are calls to additional functions, whose right hand sides in turn are obtained from their source trees. Input nodes are functions defined elsewhere.
In Fig.\ref{fig:sre} is presented a SRE for the S-graph from Fig.\ref{fig:max}b.
\begin{figure}
\begin{center}
\newcommand*{\recureq}[2]{$\mathrm{#1=#2}$}
\newcommand*{\If}{if\;}
\newcommand*{\Then}{\;then\;}
\newcommand*{\Else}{\;else\;}
\begin{tabular}{l}
\recureq{P(i)}{R(i) < X(i)} \\
\recureq{B(i)}{\If P(i) \Then X(i) \Else R(i)} \\
\recureq{R(i)}{\If i=1 \Then R1() \Else \If i>1 \Then B(i-1) \Else \bot} \\
\recureq{R1()}{0} \\
\recureq{R_{out}}{\If n=0 \Then R1() \Else \If N>0 \Then B(n) \Else \bot} \\
\end{tabular}
\end{center}
\caption {System of recurrence equations equivalent to S-graph on Fig.\ref{fig:max}b} \label{fig:sre}
\end{figure}
Execution starts with invocation of the output node function. Evaluation step is to evaluate the right hand side calling other invocations recursively.  For efficiency it is worth doing tabulation so that neither function call is executed twice for the same argument list.

Note, that both the consistency and the well-foundedness conditions together provide the termination of the S-graph.

\subsection{Computing the U-graph in the Dataflow Computation Model}\label{ss:prop:dflow}

The U-graph can be executed as program in the dataflow computation model. A node instance with concrete context values \emph{fires} when all its ports get data element in the form of data token. Each fired instance is executed by computing all its \eval\ clauses sequentially. All port and context values are used as data parameters in the execution. In each \eval\ clause the expression is evaluated, the obtained value is assigned to a local variable and then sent out according to the destination M-tree. The tree is executed in an obvious way. In the conditional vertex, the left or right subtree is executed depending on the Boolean value of the condition. In \&-vertices, all sub-trees are executed one after another. An \verb"@"-vertex acts as a {\code do}-loop with specified bounds. Each term of the form $\term{R.x}{\ser fn}$ acts as a token send statement, that sends the computed value to the graph node $R$ to port $x$ with the context built of values of $f_i$. The process stops when all output nodes get the token or when all activity stops (quiescence condition). To initiate the process, tokens to all necessary input nodes should be sent from outside.

\subsection{Extracting Source Function from S-graph}\label{ss:prop:src}
There are two ways to extract the source function from the S-graph.
First, we may use the S-graph itself as a program that computes the source for a given read when the iteration vector of the read as well as values of all predicates are available. 
We take the SRE and start evaluating the term $R(\ser in)$, where $\ser in$ are known integers. We stop as soon as a term of the form $W(\ser jm)$ is encountered on the top level (not inside predicate evaluation), where $W$ is a node name corresponding to a true write operation (not a blender) and $\ser jm$ are some integers.

Also, there is a possibility to extract the general definition of the source function for a given read in a program. 
We start from the term $\term R{\ser{\varvar i}n}$ where $\ser{\varvar i}n$ are symbolic variables and proceed unfolding the S-graph symbolically into just the S-tree. 
Having encountered the predicate node we insert the branching with symbolic predicate condition. 
Having encountered a term $\term W{\ser em}$ for regular assignment statement $W$ we stop unfolding the branch. Having encountered a term for a blender node we unfold it further - this way we avoid taking our artificial dummy assignments as a source.
Proceeding this way we will generate a possibly infinite S-tree (with predicate vertices) representing the source function in question. If we're lucky the S-tree will be finite. 
It seems like in \cite{Feautrier97FADA,GrieblHabThesis} the exact result (in the same sense) is produced only when the above process yields a finite S-tree.

But we get a good result even when the generated S-tree is infinite (this is the case in examples {\code Max} and {\code Bubble}). 
Using a technique like \emph{supercompilation} \cite{DBLP:conf/pdo/Turchin85} it is possible to fold the infinite S-tree into a finite cyclic graph.

\section{Related Work}\label{s:relwrk}
%In this Section we compare our approach with other attempts of building polyhedral models for programs with non-affine conditionals.

The foundations of dataflow analysis for arrays have been well established in the 90-s by Feautrier \cite{Feautrier88parametricinteger,Feautrier91dataflowanalysis}, Pugh\cite{DBLP:conf/sc/Pught91},  Maslov\cite{DBLP:conf/popl/Maslov94} and others. 
Their methods use the Omega and PIP libraries and yield an exact dependence relation for any pair of read and write references in affine program. 
Thus, our work adds almost nothing for the affine case (besides producing a program in the dataflow computation model). 
However, for non-affine conditions, the state-of-the-art is generally a \emph{fuzzy} solution \cite{Feautrier97FADA}, in which the source function produces \emph{a set} of possible sources. 
The authors claim that nothing more can be done. But all depends on the form we want to see the result in. Sometimes one may be satisfied with the source function expressed in the form of a {finite quast extended with predicate vertices}. Then why not allow a bit more general form - a S-graph with predicate nodes, or SRE? The main thing is that it was good for something.

Known translations from WDP to KPN \cite{StefanovThesis} usually rely on FADA and seem to succeed only when FADA succeeds to be exact (judging by the examples used). It is interesting what and how they do with the Bubble Sort in Fig.\ref{fig:Bubble}.

Our base affine machinery of building the exact PM also differs. While it is common to consider each read-write pair separately and then combine the results, our method first produces effects and states using only writes, and then resolves each read against the respective state. 
It is interesting to compare our effect/state building process with that of backward traversing the control flow graph \cite{GrieblHabThesis}. 
Both processes move along the same path but in opposite directions. 
Authors usually argue for moving backward noting that the process can stop when the total source is found (cf. also \cite{DBLP:conf/popl/Maslov94}). 
We hope to obtain the same effect just implementing our algorithm in a lazy language. Then the tree $T$ will not be built at all in calls like $\opname{Seq}(T,t)$, where $t$ is a term.

In principle, the way we deal with non-affine conditionals can be reformulated as follows: (1) push all dynamic {\code if}s to the innermost level; (2) add {\code else} parts which just assign the existing value to the same variable; (3) do FADA, identifying different copies of each predicate value; (4) collect the resulting exact source functions as S-graph, or SRE; (5) use this SRE as recursive definition of true source (passing by all dummy assignments introduced at item 2). 
%I am grateful to David Wonnacott who pointed me to this opportunity.
I am grateful to the IMPACT reviewers for pointing out this relationship.

\section{Conclusion}\label{s:concl}
Our aim was to convert a program $P$ of a specific class into the dataflow computation model. 
Thus we need not only to build the exact and complete polyhedral model (which is a set of exact source functions for all reads in $P$), but also to invert it and thus obtain the exact use set function for each write in $P$. 
The latter form can be used as a program in a dataflow computation model. 
A prototype translator is implemented in Refal-6 \cite{Refal:6}.
It admits an arbitrary WDP.

%However, the intermediate source graph is also interesting. It can also be treated as an independent semantic representation of the program, the SRE. As both forms of the PM carry the total semantics of the program, they can be used with various purposes for analysis, transformation, code generation etc. instead of original program.

The work was supported by Russian Academy of Sciences Presidium Program for Fundamental Research "Fundamental Problems of System Programming" in 2009--2014.

%
% ---- Bibliography ----
%
\bibliography{PM,Paral,PEMC-en}
\bibliographystyle{plain}
\clearpage

\clearpage

\end{document}